\newcommand{\ket}[1]{\left | #1 \right\rangle}
\newcommand{\bra}[1]{\left \langle #1 \right |}
\newcommand{\Tr}{\mathrm{Tr}}
\newcommand{\proj}[1]{\ket{#1}\bra{#1}}
\newcommand{\cE}{\mathcal{E}}
\newcommand{\cD}{\mathcal{D}}
\newcommand{\cM}{\mathcal{M}}
\newtheorem{lemma}{Lemma}
\newtheorem{theorem}{Theorem}
\newtheorem{example}{Example}
\begin{document}

\title{Quantum discord bounds the amount of distributed entanglement}

\author{T. K. Chuan}
\affiliation{Centre for Quantum Technologies, National University of Singapore, 3 Science Drive 2, 117543 Singapore}

\author{J. Maillard}
\affiliation{Blackett Laboratory, Imperial College London, Prince Consort Road, London SW7 2BZ, UK}

\author{K. Modi}
\affiliation{Department of Physics, University of Oxford, Clarendon Laboratory, Oxford, OX1 3PU, UK}
\affiliation{Centre for Quantum Technologies, National University of Singapore, 3 Science Drive 2, 117543 Singapore}

\author{T. Paterek}
\email{tomasz.paterek@ntu.edu.sg}
\affiliation{Centre for Quantum Technologies, National University of Singapore, 3 Science Drive 2, 117543 Singapore}

\author{M. Paternostro}
\affiliation{Centre for Theoretical Atomic, Molecular, and Optical Physics, School of Mathematics and Physics, Queen's University, Belfast BT7 1NN, UK}

\author{M. Piani}
\email{mpiani@iqc.ca}
\affiliation{Institute for Quantum Computing \& Department of Physics and Astronomy, University of Waterloo, 200 University Avenue West, N2L 3G1 Waterloo, Ontario, Canada}

\begin{abstract}
The ability to distribute quantum entanglement is a prerequisite for  many fundamental tests of quantum theory and numerous quantum information protocols. Two distant parties can increase the amount of entanglement between them by means of quantum communication encoded in a carrier that is sent from one party to the other. Intriguingly, entanglement can be increased even when the exchanged carrier is not entangled with the parties. However, in light of the defining property of entanglement stating that it cannot increase under classical communication, the carrier must be quantum.
Here we show that, in general, the increase of relative entropy of entanglement between two remote parties is bounded by the amount of non-classical correlations of the carrier with the parties as quantified by the relative entropy of discord.
We study implications of this bound, provide new examples of entanglement distribution via unentangled states and put further limits on this phenomenon.
\end{abstract}

\pacs{03.65.Ud, 03.67.Hk, 03.67.Mn}

\maketitle

\emph{Introduction.---}Entanglement is a trademark of quantum physics~\cite{ENT_REV} and a powerful resource enabling faster-than-classical computation~\cite{ONE-WAY-REVIEW}, efficient quantum communication~\cite{CCP_REV} and secure cryptography~\cite{EKERT1991}. 
For these reasons, the design of efficient methods to distribute entanglement
is one of the key goals of mainstream quantum information science. Of particular relevance for tasks of long-haul quantum communication is the distribution of entanglement among the remote non-interacting nodes of a quantum network~\cite{KIMBLE}. In this case, two general architectures able to accomplish this task have been identified: the first relies on the availability of a resource whose entanglement is {\it transferred} to chosen nodes of the network~\cite{TRANSF,TRANSF2,ENT_SWAPPING}. The second is {a quantum communication scenario} based on the exchange of a {\it carrier} quantum system between two of such distant nodes~\cite{SHUTTLE}, which might be referred to as the {\it sender} and {\it receiver} laboratory, respectively.

Remarkably, Cubitt \emph{et al.}~\cite{CVDC} reported a scheme where the carrier exchanged by sender and receiver remains unentangled from them at all times. This result, which was later extended to the continuous-variable scenario in~\cite{MK,MK2}, intriguingly implies that the amount of distributed entanglement does not appear to be bounded by the entanglement initially shared by the carrier and  the sender, given that in these cases they are unentangled at all times. These observations pave the way to some interesting considerations. First, quite clearly, the carrier must display some quantum features, otherwise the protocol would simply consist of the exchange of classical communication aided by local node-carrier operations, which cannot increase entanglement~\cite{LOCC}. Second, in Refs.~\cite{DATTA_THESIS,LL2008} a link has been suggested between the distribution of entanglement by separable states and the presence of more general forms of quantum correlations, as captured for example by quantum discord~\cite{HV,OZ, DISCORD_REV}, between nodes of the network and the carrier.

In light of such considerations, here we address the following fundamental questions: {\it How much can the entanglement between sender and receiver laboratories  increase under the exchange of a carrier? Is there a quantitative relation between such increase and the non-classical correlations between the carrier and the parties?}

{The key finding of our work is a general bound on how much entanglement can increase under local operations and quantum communication: the entanglement gain between distant laboratories is bounded by the amount of quantum discord between them and the carrier.} In turn, this result provides an operational interpretation of quantum discord as the truly necessary prerequisite for the success of entanglement distribution {as opposed to entanglement itself}. We show that the relation thus formulated generalizes the subadditivity of entropy and can be quite naturally linked to the possibility that quantum conditional entropy attains negative values~\cite{MERGING, THERMO_S}. Finally, we study in detail the resources required for entanglement creation and increase via the use of a separable carrier, and illustrate our findings with some new concrete examples of such phenomenon.

\emph{Definitions.---}In order to treat entanglement and discord on the same footing, throughout this paper we consider
the former as measured by the relative entropy of entanglement~\cite{REL_ENT_E,REL_ENT_E2} and the latter as quantified by the one-way quantum deficit~\cite{DEFICIT}, also known as relative entropy of discord~\cite{MODI2010}.
The quantum relative entropy between two states $\rho$ and $\sigma$ is defined as $S(\rho\|\sigma):=-S(\rho)-\Tr(\rho\log\sigma)$.
It is monotone under any completely positive trace-preserving map $\cM$, that is $ S(\rho\|\sigma)\geq S(\cM(\rho)\|\cM(\sigma))$. The relative entropy of entanglement in the bipartition $X$-versus-$Y$ is defined as the minimum relative entropy $\mathcal{E}_{X:Y}(\rho):=\min_{\rho_{X:Y}} S(\rho\|\rho_{X:Y})$ between the joint state $\rho$ of $X$ and $Y$ and the set of separable states $\rho_{X:Y}=\sum_i p_i \rho_X^i\otimes\rho_Y^i$~\cite{REL_ENT_E,REL_ENT_E2}. 
Similarly, the relative entropy of discord is defined as the minimum relative entropy $\mathcal{D}_{X|Y}(\rho):=\min_{\chi_{X|Y}} S(\rho||\chi_{X|Y})$ between $\rho$ and the set of quantum-classical states $\chi_{X|Y} = \sum_j p_j~ \chi^j_{X} \otimes \proj{j}_Y$, with $\{\ket{j}\}$ an orthonormal basis for $Y$.
It can be shown that $\mathcal{D}_{X|Y}(\rho)$ corresponds to the minimal entropic increase resulting from the performance of a complete projective measurement $\Pi_Y$ over $Y$: $\mathcal{D}_{X|Y}(\rho){=}\min_{\Pi_Y}S(\Pi_Y(\rho)){-}S(\rho)$ 
where $\Pi_Y(\rho)$
describes the state after the measurement $\Pi_Y$~\cite{MODI2010}.
Finally, mutual information between $X$ and $Y$ is defined as $\mathcal{I}_{X:Y}(\rho):=S(\rho_{XY}\|\rho_X\otimes \rho_Y)$, with $\rho_X$ and $\rho_Y$ the reduced states of $X$ and $Y$.
Mutual information quantifies the total amount of correlations present between $X$ and $Y$~\cite{GPW2005}. It holds $\mathcal{I}_{X:Y}(\rho) \geq\cD_{X|Y}(\rho)\geq \cE_{X:Y}(\rho)$.

\begin{figure}
\label{FIG_DIST}
\includegraphics[width=0.8\columnwidth]{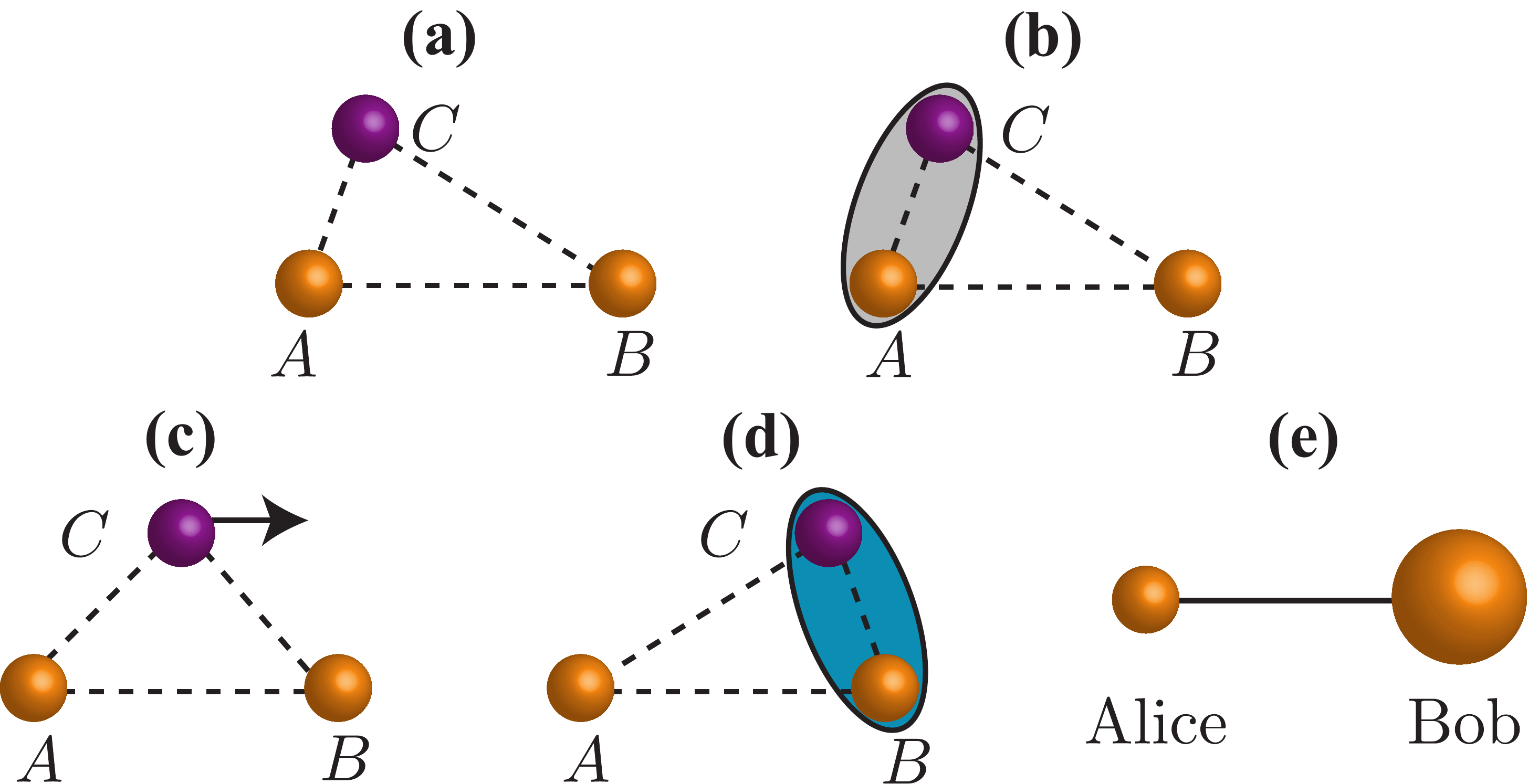}
\caption{Entanglement distribution.
{\bf (a)} The distribution protocol begins with systems $A$ and $C$ in Alice's lab and system $B$ in Bob's.
{\bf (b)} In the next step, Alice applies an {\it encoding} operation to systems $A$ and $C$.
{\bf (c)} System $C$ is then sent to Bob's site.
{\bf (d)} The carrier $C$ interacts with $B$ via a {\it decoding} operation meant to localize on $B$ the entanglement between $A$ and $BC$.
{\bf (e)} Systems $A$ and $B$ are more entangled than in panel (a).}
\end{figure}

\emph{Entanglement distribution.---}Consider two remote agents, Alice and Bob, having access to local quantum systems $A$ and $B$, respectively.
Their aim is to increase the entanglement that they share by sending an auxiliary quantum system---the carrier $C$---with which they interact locally (see Fig.~1).
The key step of any communication scheme is the transfer of a carrier system from one laboratory to the other.
The difference in entanglement across the two bipartitions $A:CB$ and $AC:B$,
corresponding to the situation after and before the transfer of the carrier, can be bound thanks to the following (see the Appendix for a proof)
\begin{theorem}
\label{TH1}
For any tripartite state $\rho=\rho_{ABC}$ it holds
\begin{equation}
\label{FOR_DELTA}
|\cE_{A:CB}(\rho)-\cE_{AC:B}(\rho)|\leq \cD_{AB|C}(\rho).
\end{equation}
\end{theorem}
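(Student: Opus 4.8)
\emph{Proof strategy.---}My plan is to establish the two one-sided inequalities contained in the statement separately. Only $\cE_{A:CB}(\rho)-\cE_{AC:B}(\rho)\leq\cD_{AB|C}(\rho)$ really needs to be worked out: exchanging the labels $A\leftrightarrow B$ sends the bipartition $A{:}CB$ to $AC{:}B$ and vice versa, while $\cD_{AB|C}$---in which $C$ is the measured party and $AB$ the unmeasured one---is manifestly invariant, so the same argument applied to the relabelled state yields $\cE_{AC:B}(\rho)-\cE_{A:CB}(\rho)\leq\cD_{AB|C}(\rho)$; together the two give the absolute value.

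For the substantive bound I would fix an arbitrary complete projective measurement $\Pi_C$ on the carrier and write $\Pi_C(\rho)$ for the associated non-selective post-measurement state. The first step is to choose a convenient separable ansatz: pick $\sigma$ attaining $\cE_{AC:B}(\Pi_C(\rho))=S(\Pi_C(\rho)\|\sigma)$. Since $\Pi_C$ is a completely positive trace-preserving map that fixes $\Pi_C(\rho)$ and maps $AC{:}B$-separable states to $AC{:}B$-separable states, monotonicity of the relative entropy lets me replace $\sigma$ by $\Pi_C(\sigma)$ at no cost, so I may assume $\sigma=\sum_j\proj{j}_C\otimes\xi^j_{AB}$ with $\xi^j_{AB}\geq0$, i.e. that $\sigma$ is block-diagonal in the measured basis of $C$. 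The pivotal structural observation is that such a $\sigma$, being separable across $AC{:}B$, is automatically separable across $A{:}CB$ as well: sandwiching an $AC{:}B$-product decomposition of $\sigma$ with $\proj{j}_C$ exhibits each $\xi^j_{AB}$ as an (unnormalised) state separable across $A{:}B$, and re-associating $C$ with $B$ then displays $\sigma$ as separable across $A{:}CB$. Hence $\sigma$ is a legitimate competitor in the minimisation defining $\cE_{A:CB}(\rho)$, so that $\cE_{A:CB}(\rho)\leq S(\rho\|\sigma)$.

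The last ingredient is a short entropic identity. Because $\sigma$, and therefore $\log\sigma$, is block-diagonal in the measured basis of $C$ while $\Pi_C$ is self-adjoint and idempotent as a channel, one has $\Tr(\rho\log\sigma)=\Tr(\Pi_C(\rho)\log\sigma)$; inserting this into the definition of the quantum relative entropy recalled above yields $S(\rho\|\sigma)=S(\Pi_C(\rho)\|\sigma)+S(\Pi_C(\rho))-S(\rho)$, all terms finite because the pinching obeys $\mathrm{supp}\,\rho\subseteq\mathrm{supp}\,\Pi_C(\rho)\subseteq\mathrm{supp}\,\sigma$. Chaining the estimates and using $S(\Pi_C(\rho)\|\sigma)=\cE_{AC:B}(\Pi_C(\rho))\leq\cE_{AC:B}(\rho)$ (the inequality once more by monotonicity of the relative entropy under the local map $\Pi_C$), I obtain $\cE_{A:CB}(\rho)\leq\cE_{AC:B}(\rho)+S(\Pi_C(\rho))-S(\rho)$. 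The left-hand side does not involve $\Pi_C$, so minimising the right-hand side over all complete projective measurements on $C$ and invoking $\cD_{AB|C}(\rho)=\min_{\Pi_C}[S(\Pi_C(\rho))-S(\rho)]$ produces exactly the desired bound.

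The one genuinely delicate point, I expect, is the structural lemma that $C$-block-diagonality upgrades $AC{:}B$-separability to $A{:}CB$-separability---this is where the interplay of the three bipartitions actually does the work---although it becomes elementary once the product decomposition is spelled out. Two smaller matters also warrant a check: that restricting the optimisation to rank-one projective measurements on $C$, as in the adopted definition of $\cD_{AB|C}$, is harmless, which holds because the entire chain of inequalities is valid for each such measurement prior to the minimisation; and that the $A\leftrightarrow B$ relabelling used for the reverse bound is legitimate, which is immediate from the symmetry of $\cE$ and $\cD$ under exchange of the relevant subsystems.
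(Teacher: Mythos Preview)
Your proof is correct and follows essentially the same route as the paper: a $C$-block-diagonal separable ansatz, the identity $S(\rho\|\sigma)=S(\Pi_C(\rho)\|\sigma)+[S(\Pi_C(\rho))-S(\rho)]$ for $\Pi_C$-invariant $\sigma$, monotonicity of $\cE_{AC:B}$ under the local channel $\Pi_C$, and the $A\leftrightarrow B$ swap. The only organizational difference is that the paper builds the ansatz bottom-up from the optimal $A{:}B$-separable approximations to the conditional states $\rho^i_{AB}$ (thereby first establishing the slightly stronger Lemma~\ref{LEM:NEWLEMMA}), whereas you obtain it top-down by block-diagonalising the optimal $AC{:}B$-separable state for $\Pi_C(\rho)$ and then observing that block-diagonality upgrades $AC{:}B$-separability to $A{:}CB$-separability; the two constructions coincide via the flags property.
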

We apply this relation to the scenario of Fig.~1. Let us call $\alpha$ the initial state of $A,B$ and $C$, and $\beta=\mathcal{M}_{AC}(\alpha)$ the state obtained from it by means of a local \emph{encoding} operation $\mathcal{M}_{AC}$ that does not increase entanglement in the $AC:B$ cut, i.e.  $\cE_{AC:B}(\beta)\leq \cE_{AC:B}(\alpha)$. System $C$ is then sent to Bob's site, where it interacts with $B$ via a {\it decoding} operation meant to localize on $B$ alone the entanglement between the laboratories~\cite{ENT_LOC}.
As a side note, we mention that one could also consider local encoding operations that add ancillary systems. However, this is taken into account by including all ancillas in $A$ or $C$ from the very beginning. 
Combining the above description with Eq.~\eqref{FOR_DELTA} for $\beta$ we arrive at
\begin{equation}
\label{FOR_DELTA_E}
\cE_{A:CB}(\beta) 
\leq\cE_{AC:B}(\alpha)+\cD_{AB|C}(\beta).
\end{equation}
This shows that the entanglement gain between distant laboratories is bounded by the amount of quantum discord as measured on the communicated system -- the communicated quantum correlations.
In what follows we discuss the meaning and the implications of the bounds given in Eqs.~\eqref{FOR_DELTA} and \eqref{FOR_DELTA_E}.

\emph{Impossibility of entanglement distribution by local operations and classical communication.---}Let us first address the case of ${\cal D}_{AB|C}(\beta){=}0$.
This corresponds to classical communication from Alice to Bob as it implies that $\beta$ has the quantum-classical structure $\beta = \sum_{i} p_i \rho^i_{AB} \otimes \proj{i}_{C}$ . The index $i$ embodies classical information that Alice may copy locally before sending $C$ to Bob.
After $C$ is transferred from Alice to Bob, both have access to this information.
Bob can then perform a local transformation that depends on the index $i$ originally held only by Alice.  The process just described is one communication step of a general protocol based on the use  of local operations and classical communication (LOCC).
The protocol may include several rounds of classical communication with $C$ that is sent back and forth between Alice and Bob; local classical registers can be kept or erased at any stage of the protocol. {In this case, Eq.~\eqref{FOR_DELTA_E} reduces to the statement that entanglement does not increase at any step of a protocol based on LOCC~\cite{LOCC}. If ${\cal D}_{AB|C}(\beta)$ does not vanish, the transfer of $C$ cannot be interpreted as classical communication revealing the role of discord in general quantum communication.
Hence, Eq.~\eqref{FOR_DELTA_E} constitutes a non-trivial relaxation of the condition of monotonicity of entanglement under LOCC, bounding the increase of entanglement under local operations and \emph{quantum communication}.}

\emph{Subadditivity of entropy.---}Let us now take a tripartite pure state $\rho=\proj{\phi}_{ABC}$. 
Since for a generic pure state $\ket{\psi}_{XY}$ both the relative entropy of entanglement and the relative entropy of discord coincide with the entropy of the reduced states of $X$ or $Y$,
Eq. \eqref{FOR_DELTA} becomes
\begin{equation}
\label{EQ:ALINEQ}
|S(\rho_{A})-S(\rho_B)|\leq S(\rho_{AB}),
\end{equation}
which is the Araki-Lieb inequality for the von Neumann entropy~\cite{ARAKI-LIEB} and is equivalent to the subadditivity of entropy for subsystems $AC$ and $BC$. {Accordingly, Eq.~\eqref{FOR_DELTA} can be interpreted as a possible generalization of the subadditivity of entropy, based on the concepts of entanglement and quantumness of correlations and valid for tripartite mixed states.}

\emph{Simple meaning of quantum conditional entropy.---}Consider the bipartite system composed of $A$ and $C$, both held at Alice's location, and prepared in a state $\rho_{AC}$ with conditional entropy $S_{C|A}(\rho) := S(\rho_{AC})-S(\rho_A)$.
Let us introduce a third system $B$ being a purification of $\rho_{AC}$ and let us place it in a distant laboratory.
The left-hand side of Eq.~\eqref{FOR_DELTA}, written for a pure tripartite system, reads
\begin{equation}
\cE_{A:CB}(\rho)-\cE_{AC:B}(\rho) = S_{C|B}(\rho) = - S_{C|A}(\rho).
\end{equation}
Therefore, the negative conditional entropy $- S_{C|A}$ of $\rho_{AC}$ gives the increase of entanglement between distant laboratories caused by the transfer of $C$.

\emph{Entanglement distribution via separable system.---}The bound derived in Eq.~\eqref{FOR_DELTA} is tight in some cases; in particular, we have verified that it is tight for the three-qubit state of the seminal example of entanglement creation with an unentangled carrier introduced in Ref.~\cite{CVDC}.
Motivated by this, and in order to emphasize the significance of the appearance of discord rather than entanglement on the right-hand side of Eq.~\eqref{FOR_DELTA},
here we focus on the general conditions for the success of entanglement creation by means of a separable carrier. In the present framework, this corresponds to requiring
\begin{subequations}
\label{NECC_FOR_DIST}
\begin{align}
\mathcal{E}_{B:AC}(\alpha) &= 0 \quad(\Rightarrow \mathcal{E}_{B:AC}(\beta)= 0),\label{EQ:NECB}\\
\mathcal{E}_{C:AB}(\beta) &= 0,\label{EQ:NECC}\\
\mathcal{E}_{A:BC}(\beta) &> 0\label{EQ:NECA}.
\end{align}
\end{subequations}
Eq.~\eqref{EQ:NECB} says that no entanglement between the distant sites is present initially. The implication is due to the local nature of the encoding operation $\cM_{AC}$. Eq.~\eqref{EQ:NECC} encompasses our prescription that the carrier must be separable from $A$ and $B$. Finally, Eq.~\eqref{EQ:NECA} ensures that non-vanishing entanglement is established by exchanging the carrier. We remark that non-vanishing  $A:BC$ entanglement does not necessarily imply the possibility of creating $A:B$ entanglement via the local decoding operation on $BC$ mentioned above. Indeed, if this was always possible, bound entanglement~\cite{BOUND_ENT} would not exist, as one could always map entanglement into two-qubit entanglement, which is known to be distillable~\cite{2QUBIT_DIST}. However, in many relevant cases, including all our examples, entanglement can be localized as shown by the Theorem 2 in the Appendix.

In order to satisfy the conditions \eqref{NECC_FOR_DIST}, besides the discord present in $\beta$, there must be discord on the receiver side already in the initial state $\alpha$. This is seen by applying Eq.~\eqref{FOR_DELTA} again, but with the roles of $B$ and $C$ interchanged, and using the fact that discord does not increase under operations on the unmeasured systems~\cite{DISCORD_ENT}, arriving to
\begin{equation}
\label{EQ:BQUANTUM}
\cE_{A:CB}(\beta)\leq\cE_{AB:C}(\beta) 
+\cD_{AC|B}(\alpha).
\end{equation}
If Eq.~\eqref{EQ:NECC} holds, we obtain the relation $\mathcal{E}_{A:BC}(\beta)\leq\mathcal{D}_{AC|B}(\alpha)$.
Note that if $C$ is initially not correlated with $AB$, tha latter further simplifies to $\mathcal{E}_{A:BC}(\beta)\leq\mathcal{D}_{A|B}(\alpha)$.
Another interesting limiting case of  Eq.~\eqref{EQ:BQUANTUM} is when $\cD_{AC|B}(\alpha)=0$. Then $B$ is classical initially and therefore also in the state $\beta$ after the encoding:
$\beta=\sum_ip_i\beta^i_{AC}\otimes\proj{i}_B$. In this case entanglement between Alice and Bob can only be created if the carrier is entangled with the sites and, in particular, only if at least one $\beta^i_{AC}$ is entangled. Indeed, such $\beta$ simply describes a situation in which Bob, upon reading the index $i$ encoded in $B$, knows which of many states $\beta^i_{AC}$ he will end up sharing with Alice.

On the other hand, entanglement creation with a separable carrier is  possible starting from a state with $\cD_{BC|A}(\alpha)=0$. For instance, it is enough to consider the three-qubit example given in Ref.~\cite{CVDC}, but starting with $A$ and $C$ interchanged and using a step in the encoding operation $\cM_{AC}$ to undo the change before proceeding with the original protocol.
However, under further restrictions, the classicality of $A$ may  
prevent entanglement creation with a separable carrier, as shown for instance in Theorem 3 in the Appendix.


Furthermore, we note that when the encoding operation is restricted to be unitary, the presence of discord (on either party) is not a sufficient precondition to make entanglement creation with a separable carrier possible. This follows by combining the fact that any bipartite state that is sufficiently mixed is separable~\cite{GB} and the existence of discordant states infinitesimally close to any non-discordant one~\cite{FACCA}. As unitary operations do not change mixedness, discord of sufficiently mixed states cannot be converted into entanglement.

Finally, for a fixed dimension of the carrier, it is more efficient to use an entangled carrier rather than a separable one. On one hand, by sending a $d$-dimensional system  that is maximally entangled with a similar one that remains with the sender, we can increase the shared entanglement by $\log_2 d$. On the other hand, Theorem 4 of the Appendix shows that using separable states the entanglement increase is strictly smaller than $\log_2 d$.

\emph{Examples.---}In order to make our result more concrete, in Appendix we provide new examples of both the creation and the increase of entanglement between distant parties by the exchange of an unentangled carrier.
The examples are based on the fact that the state of a bipartite system of total dimensions $d_{\mathrm{tot}}$ having the form
$\rho_{p} = p \ket{\psi} \bra{\psi} + (1-p){\openone}/{d_{\mathrm{tot}}}$
 is separable if and only if $p \le p_{\mathrm{cr}}=(1+ a_1 a_2 d_{\mathrm{tot}})^{-1}$, where $a_{1}$ and $a_2$ are the two largest Schmidt coefficients of the bipartite state $\ket{\psi}$, and $\openone/d_{\mathrm{tot}}$ is the maximally mixed state of the total system~\cite{VT1999}. Consider now a tripartite pure state $\ket{\psi}=\ket{\psi}_{ABC}$. This state admits three Schmidt decompositions corresponding to the three bipartitions $A:BC$, $B:AC$, and $C:AB$. One can choose $\ket{\psi}$ such that $p_{\mathrm{cr}}$ is the lowest across the $A:BC$ bipartition, so that there is a finite range for $p$ such that $\rho_{p}$ is $A:BC$-entangled but separable in the remaining two splittings. Such a $\rho_p$ is meant to play the role of $\beta$ in our scenario. 
We remark that the three-qubit example of Ref.~\cite{CVDC} uses a carrier system $C$ that is initially classically correlated with $A$ and $B$. However, a scenario where $C$ initially shares no correlation with the remote nodes is more relevant from a practical  point of view, as one can imagine that the carrier is an independent system to be used to distribute entanglement. Even with such a restriction, entanglement can be established via a separable system, as proven explicitly by our examples in the Appendix.

\emph{Conclusions.---}
It is the very act of physical transmission of a carrier system that changes the amount of correlations between the remote laboratories.
To illustrate this consider total correlations, as captured by mutual information.
One expects from the principle of no-signaling or information causality~\cite{IC} that the increase of mutual information is bounded by the amount of communicated correlations.
Indeed, applying the chain rule for mutual information and its monotonicity under local operations~\cite{NC} one finds
\begin{equation}
\label{EQ:MINFO}
\mathcal{I}_{A:CB} - \mathcal{I}_{AC:B} \le \mathcal{I}_{A:C} \le \mathcal{I}_{AB:C}.
\end{equation}
Both in classical and quantum information theory, the increase of total correlations between the labs is bounded by the correlations between the systems that are kept stored in the labs and the carrier -- the communicated total correlations.

However, whereas there is only one kind of correlations between classical random variables, quantum systems can share different kinds of correlations~\cite{DISCORD_REV}. 
In this work we proved a relation analogous to Eq.~\eqref{EQ:MINFO} for the increase of quantum entanglement between remote elements of a quantum network.
We showed that such increase is bounded from above by the amount of non-classical correlations between the exchanged carrier and the distant nodes
as measured by quantum discord, a quantifier for a more general type of non-classical correlations than entanglement.
It follows that, in contrast with what one would expect extrapolating from Eq.~\eqref{EQ:MINFO}, our bound for the entanglement increase is in general larger than the entanglement between the carrier and the nodes; in particular, it can be non-zero even when the latter vanishes. 
Indeed, this has to be the case, as implied by the seminal example of entanglement distribution using a separable carrier of Ref.~\cite{CVDC}.

Besides providing a natural operational interpretation of quantum discord as the truly necessary prerequisite for the success of entanglement distribution, our work identifies the conditions for the occurrence of entanglement distribution with a separable carrier. The scenario tackled by our study is general enough to fit well with a few experimental settings, including cavity/circuit-QED and trapped-ion technology and we thus hope that our results will find a prompt experimental demonstration.

\emph{Acknowledgments.---}M. Paternostro thanks the Centre for Quantum Technologies, National University of Singapore and, together with J. Maillard, the Institute for Quantum Computing, University of Waterloo for the kind hospitality during the early stages of this work. We acknowledge financial support from the National Research Foundation and Ministry of Education in Singapore (T. K. Chuan, K. Modi, and T. Paterek), the John Templeton Foundation (K. Modi), the UK EPSRC (M. Paternostro), NSERC, CIFAR, and the Ontario Centres of Excellence (M. Piani). M. Paternostro acknowledges valuable discussions with F. Ciccarello during the early stages of this work.

\emph{Note added.---}During the completion of this work we became aware of the closely related independent work by A. Streltsov \emph{et al.} ~\cite{ALEXANDUS}. 

\section{Appendix}

In this Appendix we provide statements and proofs of the theorems mentioned in the main text and present new examples of both the creation and the increase of entanglement between distant parties by the exchange of an unentangled carrier.

\subsection{Theorems}

We prove here Theorem \ref{TH1} of the main text.
It is a consequence of the following Lemma.
\begin{lemma}
\label{LEM:NEWLEMMA}
Given $\rho=\rho_{ABC}$, consider $\Pi^*_C$, the optimal projective measurement on $C$ for the sake of $\cD_{AB|C}(\rho)$. Let $p_i$ be the probability of outcome $i$ for such a measurement, and $\rho^i_{AB}$ be the corresponding conditional states of $AB$; i.e., $\Pi^*_C(\rho_{ABC})=\sum_i p_i \rho^i_{AB}\otimes \proj{i}_C$. Then
\begin{equation}
\label{EQ:NEWLEMMA}
\begin{split}
\cE_{A:CB}(\rho) &\leq \cD_{AB|C}(\rho) +   \sum_ip_i\cE_{A:B}(\rho^i_{AB})\\
  &= \cD_{AB|C}(\rho) + \cE_{A:CB}(\Pi^*_C(\rho))\\
  &= \cD_{AB|C}(\rho) + \cE_{AC:B}(\Pi^*_C(\rho))
\end{split}
\end{equation}
\end{lemma}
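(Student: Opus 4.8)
\emph{Proof plan.---}The plan is to split \eqref{EQ:NEWLEMMA} into two parts: the two equalities, which only concern states carrying a classical flag on $C$, and the inequality, which will follow from a Pythagorean-type decomposition of the relative entropy across the pinching $\Pi^*_C$. Throughout I write $\chi:=\Pi^*_C(\rho)=\sum_i p_i\,\rho^i_{AB}\otimes\proj{i}_C$, and I use that $\Pi^*_C$ acts as a pinching (a complete projective measurement) on $C$ alone, hence is a local operation both for the $A$-versus-$CB$ and for the $AC$-versus-$B$ cut, and is self-adjoint and idempotent.

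For the equalities I would establish $\cE_{A:CB}(\chi)=\cE_{AC:B}(\chi)=\sum_i p_i\cE_{A:B}(\rho^i_{AB})$ through two inequalities, the argument being identical for both cuts. For the upper bound, choose an optimal $A$-versus-$B$ separable state $\sigma^i$ for each $\rho^i_{AB}$ and form $\sum_i p_i\,\sigma^i\otimes\proj{i}_C$; this is separable in both splittings, and since it and $\chi$ are simultaneously block-diagonal in the orthonormal basis $\{\ket{i}_C\}$, their relative entropy decomposes as $\sum_i p_i\,S(\rho^i_{AB}\|\sigma^i)=\sum_i p_i\cE_{A:B}(\rho^i_{AB})$. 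For the matching lower bound, take any separable approximant $\tau$ of $\chi$ in the chosen cut and dephase $C$ in the basis $\{\ket{i}\}$: this preserves separability in that cut and, by monotonicity of the relative entropy together with $\Pi^*_C(\chi)=\chi$, cannot increase $S(\chi\|\tau)$; the dephased state is block-diagonal, so the same decomposition, after dropping a non-negative classical relative-entropy term, gives $S(\chi\|\tau)\ge\sum_i p_i\cE_{A:B}(\rho^i_{AB})$.

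For the inequality I would first recall that the closest quantum--classical state to $\rho$ is the post-measurement state of the optimal measurement, that is, $\cD_{AB|C}(\rho)=S(\rho\|\chi)$; this is immediate from the entropic characterization of $\cD_{AB|C}$ quoted in the main text together with the pinching identity $\Tr(\rho\log\Pi_C(\rho))=\Tr(\Pi_C(\rho)\log\Pi_C(\rho))$. Let $\sigma$ be an optimal $A$-versus-$CB$ separable state for $\chi$, so $\cE_{A:CB}(\chi)=S(\chi\|\sigma)$; by the dephasing argument of the previous paragraph we may take $\Pi^*_C(\sigma)=\sigma$. Then $\log\chi$ and $\log\sigma$ are both $\Pi^*_C$-invariant, and self-adjointness of $\Pi^*_C$ gives $\Tr(\rho\log\chi)=\Tr(\chi\log\chi)$ and $\Tr(\rho\log\sigma)=\Tr(\chi\log\sigma)$; inserting these into the definitions of the relative entropies yields
\begin{equation}
S(\rho\|\sigma)=S(\rho\|\chi)+S(\chi\|\sigma)=\cD_{AB|C}(\rho)+\cE_{A:CB}(\chi).
\end{equation}
Since $\sigma$ is separable in the $A:CB$ cut, $\cE_{A:CB}(\rho)\le S(\rho\|\sigma)$, and combining with the equalities above produces \eqref{EQ:NEWLEMMA}.

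The step I expect to be the main obstacle is the reduction to a pinching-invariant optimal separable state: one has to check that dephasing $\sigma$ over $\{\ket{i}_C\}$ spoils neither its separability in the relevant cut nor its optimality, so that the Pythagorean identity can be invoked. Everything else is bookkeeping; degenerate cases (some $p_i=0$, or support mismatch between $\rho$ and $\chi$ or $\sigma$) only send the relevant relative entropies to $+\infty$ and are harmless.
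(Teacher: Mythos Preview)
Your proof is correct and, at the level of the underlying identities (the pinching identity $\Tr(\rho\log\Pi(\tau))=\Tr(\Pi(\rho)\log\Pi(\tau))$, the block-diagonal splitting of relative entropy, and the use of a separable state that is classical on $C$), it coincides with the paper's argument. The only differences are organizational: the paper starts from the explicit candidate $\sum_i p_i\,\rho^{i*}_{A:B}\otimes\proj{i}_C$, computes $S(\rho\|\cdot)$ directly, and then invokes the ``flags'' property of the relative entropy of entanglement for the two equalities; you instead first prove the flags identity yourself (your upper/lower bound argument for $\cE_{A:CB}(\chi)=\cE_{AC:B}(\chi)=\sum_i p_i\cE_{A:B}(\rho^i_{AB})$) and then package the same computation as the Pythagorean relation $S(\rho\|\sigma)=S(\rho\|\chi)+S(\chi\|\sigma)$ with a $\Pi^*_C$-invariant optimal $\sigma$. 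Your ``main obstacle'' is exactly the step the paper hides inside its citation of the flags condition, and you handle it correctly: dephasing on $C$ is local in both cuts, preserves separability, and by monotonicity cannot increase $S(\chi\|\cdot)$, so a $\Pi^*_C$-invariant optimal separable state exists. In short, same proof, slightly more self-contained presentation.
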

\begin{proof}
Let $\rho^{i*}_{A:B}$ be the optimal separable state for the sake of $\cE_{A:B}(\rho^i_{AB})$. The state $\sum_i p_i \rho^{i*}_{A:B}\otimes \proj{i}_C$ is fully separable and \emph{a fortiori} $A:CB$-separable; moreover it is invariant under the action of $\Pi^*_C$. 
Then the inequality \eqref{EQ:NEWLEMMA} is obtained as follows:
\begin{subequations}
\begin{align}
&\quad\cE_{A:CB}(\rho) \leq S(\rho\|\sum_i p_i \rho^{i*}_{A:B}\otimes \proj{i}_C ) \label{eq:thm1.2}\\
&=-S(\rho)-\Tr[ \rho \log (\sum_i p_i \rho^{i*}_{A:B}\otimes \proj{i}_C ) ]\notag\\
&=-S(\rho)-\Tr[ \Pi^*_C(\rho) \log (\sum_i p_i \rho^{i*}_{A:B}\otimes \proj{i}_C )]\label{eq:thm1.3}\\
&=\Big[S(\Pi^*_C(\rho))-S(\rho)\Big]\notag+\Big[-S(\Pi^*_C(\rho))\\
&\quad-\Tr\Big(\Pi^*_C(\rho) \log ( \sum_i p_i \rho^{i*}_{A:B}\otimes \proj{i}_C) \Big)\Big]\notag\\
&=\cD_{AB|C}(\rho)\notag\\
&\quad+S(\sum_i p_i \rho^i_{AB}\otimes \proj{i}_C\|\sum_i p_i \rho^{i*}_{A:B}\otimes \proj{i}_C)\label{eq:thm1.4}\\
&= \cD_{AB|C}(\rho)+\sum_ip_iS(\rho^i_{AB}\| \rho^{i*}_{A:B})\label{eq:thm1.5}\\
&=\cD_{AB|C}(\rho) + \sum_ip_i\cE_{A:B}(\rho^i_{AB}),\label{eq:thm1.6}
\end{align}
\end{subequations}
where the steps are justified as follows: for Eq.~\eqref{eq:thm1.2},  the fully separable state $\sum_i p_i \rho^{i*}_{A:B}\otimes \proj{i}_C$ cannot be better than optimal for the sake of $\cE_{A:CB}(\rho)$; for Eq.~\eqref{eq:thm1.3}, $\Tr(\sigma\log\Pi(\tau))=\Tr(\Pi(\sigma)\log\Pi(\tau))$ for all (complete or non-complete) projective measurements $\Pi$, and for all $\sigma$ and all $\tau$~\cite{NC}; for Eq.~\eqref{eq:thm1.4}, by the optimality of $\Pi^*_C$ for the sake of $\cD_{AB|C}(\rho)$; for Eq.~\eqref{eq:thm1.5}, by the chain rule for relative entropy~\cite{restrictedmeasurements}; for Eq.~\eqref{eq:thm1.6}, by the optimality of each $\rho^{i*}_{A:B}$ for the sake of $\cE_{A:B}(\rho^i_{AB})$.
 Finally, the two last lines of Eq.~\eqref{EQ:NEWLEMMA} are due to the fact that relative entropy of entanglement satisfies the ``flags'' condition of Ref.~\cite{MHflags}, i.e. $\cE_{FX:Y}\left(\sum_i p_i \proj{i}_F\otimes\rho^i_{XY}\right)=\sum_i p_i \cE_{X:Y}(\rho^i_{XY})= \cE_{X:YF}\left(\sum_i p_i\rho^i_{XY}\otimes \proj{i}_F\right)$.
\end{proof}
The statement of the above Lemma regards entanglement redistribution. Nonetheless it is related to --- and can be seen as a generalization of --- the results of Ref.~\cite{LOCKINGENT}, where it was proven that the relative entropy of entanglement is not lockable by dephasing any single qubit held by one of the parties.
In our context, it is further worth recalling that  the variation of a generic relative entropy-based measure of correlations --- not necessarily entanglement --- under the complete dephasing of one of the two parties was considered in Ref.~\cite{DEFICIT}.
We notice that the total dephasing of one of the two parties would simply destroy all entanglement. The bound given in Eq.~\eqref{EQ:NEWLEMMA} is based on the consideration of a hypothetical optimal complete von Neumann measurement performed only on the subsystem that is to be transferred from one party to the other.

\noindent\emph{Proof of Theorem \ref{TH1}}. 
Applications of Lemma~\ref{LEM:NEWLEMMA} and the monotonicity of the relative entropy of entanglement under LOCC gives
\begin{equation}
\begin{split}
\cE_{A:CB}(\rho) &\leq\cD_{AB|C}(\rho) + \cE_{AC:B}(\Pi^*_C(\rho))\\
&\leq \cD_{AB|C}(\rho) + \cE_{AC:B}(\rho).
\end{split}
\end{equation}
By inverting the roles of $A$ and $B$, we obtain Eq.~\eqref{FOR_DELTA}.\hfill$\Box$

We remark that Lemma~\ref{LEM:NEWLEMMA}, although less amenable to a clear operational interpretation, is in general strictly stronger than Theorem \ref{TH1}. Consider for example the case of a pure tripartite state symmetric under the exchange of $A$, $B$ and $C$. For such a case, Eq.~\eqref{EQ:ALINEQ} is clearly not tight as soon as $S(A)=S(B)=S(C)=S(AB)=S(AC)=S(BC)>0$, since the left-hand side of Eq.~\eqref{EQ:ALINEQ} would vanish but its right-hand side would not. On the other hand, in the same case, provided that $\Pi_C^*$ [i.e. the measurement that is optimal for the sake of $\cD_{AB|C}(\rho)$] is such that all conditional states $\rho_{AB}^i$ are separable, Eq.~\eqref{EQ:NEWLEMMA} is tight. This happens, for example, for the tripartite Greenberger-Horne-Zeilinger state $\rho=\proj{\textrm{GHZ}}$, with $\ket{\textrm{GHZ}}=(\ket{000}+\ket{111})/\sqrt{2}$.

\setcounter{theorem}{1}

\begin{theorem}[On entanglement localization]
Let us denote the dimensions of subsystems $A$ and $B$ by $d_A$ and $d_B$, respectively.
If $d_B \ge d_A$ and a tripartite state $\beta = \beta_{ABC}$ has negative partial transposition~\cite{PERES,HoroNPT} in the cut $A:BC$, then it is possible to localize its entanglement onto subsystems $A$ and $B$ using decoding operation on systems $BC$ only.
\end{theorem}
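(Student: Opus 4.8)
The plan is to exploit the structure of NPT states together with the celebrated fact that any NPT state in a bipartite cut with one subsystem of dimension at most $2$ (or, more generally, any NPT state of a $2 \otimes n$ or $2\otimes 2$ system) is distillable, and more to the point that NPT is \emph{necessary and sufficient} for $1$-distillability in the two-qubit case. However, since here $A$ may be higher-dimensional, the cleaner route is to show directly that a suitable local filtering on $BC$ (which is a legitimate decoding operation, since $B$ and $C$ are both on Bob's side after the carrier has been transferred) maps $\beta$ to a state whose $A{:}B$ reduction is entangled. First I would recall that $\beta$ being NPT in the cut $A{:}BC$ means there exists a product vector $\ket{a}_A\ket{\eta}_{BC}$, or more precisely a vector $\ket{\Psi}_{ABC}$ of Schmidt rank two across $A{:}BC$, such that $\bra{\Psi}\beta^{T_A}\ket{\Psi}<0$; equivalently, writing $\ket{\Psi}=\sum_{k=0}^{1}\ket{k}_A\ket{\psi_k}_{BC}$ in the Schmidt basis of $A$, the $2\times 2$ "witness block" of $\beta$ obtained by sandwiching with $\{\ket{\psi_k}_{BC}\}$ has a negative-eigenvalue partial transpose on the qubit $A$.

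The key step is then to construct the decoding map on $BC$. Let $\{\ket{\psi_0}_{BC},\ket{\psi_1}_{BC}\}$ be the (in general non-orthogonal) vectors appearing above; complete them to a basis and consider the operator $F_{BC}$ on Bob's joint system that acts as an isometry taking $\mathrm{span}\{\ket{\psi_0},\ket{\psi_1}\}$ into a two-dimensional subspace of $B$ alone (this uses $d_B\ge 2$; in fact $d_B\ge d_A\ge 2$ whenever there is any entanglement at all, so the dimension hypothesis is exactly what guarantees room for the image) and annihilates the orthogonal complement. Physically this is a local filtering on $BC$ followed by a local unitary on $B$ that "reads out" the relevant two-dimensional sector into $B$ and discards $C$. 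Applying $\openone_A\otimes F_{BC}$ to $\beta$ and renormalizing yields a state $\tilde\beta$ supported (on the $ABC$ side) within $A\otimes B_{(2)}\otimes \ket{0}_C$, i.e.\ effectively a two-qudit state $\tilde\beta_{AB}$ with $B$ restricted to a qubit. The point of choosing $F_{BC}$ this way is that $\bra{\Psi'}\tilde\beta_{AB}^{T_A}\ket{\Psi'}<0$ for the image vector $\ket{\Psi'}=\sum_k\ket{k}_A\ket{k}_{B}$, because the partial transpose commutes with the local filtering on the untransposed side in the appropriate block sense; hence $\tilde\beta_{AB}$ is NPT, therefore entangled, and we have localized entanglement onto $A$ and $B$.

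The main obstacle I anticipate is making rigorous the claim that the negativity of the partial transpose "survives" the filtering on $BC$ and collapses correctly onto the $A{:}B$ cut — in particular, the partial transpose is taken on $A$, but the filtering acts on $BC$, and while $(\openone_A\otimes F_{BC})\,\beta^{T_A}\,(\openone_A\otimes F_{BC})^\dagger=\big((\openone_A\otimes F_{BC})\,\beta\,(\openone_A\otimes F_{BC})^\dagger\big)^{T_A}$ holds because the transposed and filtered subsystems are disjoint, one must still check that discarding $C$ (a partial trace) does not wash out the negative eigenvalue. This is handled by noting that the witness vector $\ket{\Psi'}$ has been engineered to lie entirely in the surviving $C$-sector $\ket{0}_C$, so the partial trace over $C$ acts trivially on the relevant matrix element; writing this out carefully with the explicit form of $F_{BC}$ is the technical heart of the argument, but it is bookkeeping rather than a conceptual difficulty. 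A secondary point to address is that a filtering operation is only trace-preserving when completed with a complementary Kraus operator; since we are free to consider the protocol conditioned on the successful filtering outcome (which occurs with nonzero probability because $\beta$ has support on the relevant subspace), and entanglement of the conditional output lower-bounds what can be localized, this causes no trouble.
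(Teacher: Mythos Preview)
Your proposal has a genuine gap at the very first step. You ``recall'' that NPT in the cut $A{:}BC$ means there exists a witness vector $\ket{\Psi}$ of \emph{Schmidt rank two} across $A{:}BC$ with $\bra{\Psi}\beta^{T_A}\ket{\Psi}<0$. That is not what NPT says: it only guarantees the existence of \emph{some} vector $\ket{\Psi}$ (of Schmidt rank possibly as large as $d_A$) with that property. The existence of a Schmidt-rank-two witness is precisely the condition of $1$-distillability, and whether every NPT state admits such a witness is the famous open problem of NPT bound entanglement. So the reduction to a two-dimensional sector of $B$ is unjustified, and with it the claim that $d_B\ge 2$ suffices; this is also why your argument never actually uses the hypothesis $d_B\ge d_A$ in a meaningful way.

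The paper's proof takes essentially the route you sketch, but without this unwarranted restriction: it writes the full Schmidt decomposition of the negative-eigenvalue witness $\ket{\psi}=\sum_{j=1}^{d_A}a_j\ket{a_j}_A\ket{\bar a_j}_{BC}$ (the $\ket{\bar a_j}_{BC}$ are automatically orthonormal, contrary to your ``in general non-orthogonal'' remark), and then uses the hypothesis $d_B\ge d_A$ to build a \emph{unitary} $U_{BC}$ sending each $\ket{\bar a_j}_{BC}$ to $\ket{j}_B\ket{0}_C$. The commutation of partial transpose on $A$ with operations on $BC$ then yields $\bra{\phi}(\bra{0}U_{BC}\beta U_{BC}^\dagger\ket{0})^{T_A}\ket{\phi}<0$, so the post-selected $AB$ state is NPT. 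Your intuition about the ``survival'' of negativity and the role of the fixed $C$-sector is right; what you need to fix is to carry along all $d_A$ Schmidt terms, at which point the dimension hypothesis becomes exactly the condition that the image fits inside $B$.
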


\begin{proof}
We prove that there exists a unitary operation on systems $BC$
followed by a measurement on $C$ and post-selection on a particular outcome,
such that the post-selected state of $AB$ has negative partial transposition.
By assumption there is a pure state $\ket{\psi}$ for which
\begin{equation}
\bra{\psi} \beta^{T_A} \ket{\psi} < 0,
\end{equation}
where $T_A$ denotes partial transposition on system $A$.
The Schmidt decomposition implies $\ket{\psi} = \sum_{j=1}^{d_A} a_j \ket{a_j}_A \ket{\bar a_j}_{BC}$.
By our dimensionality assumption there exists a unitary transformation $U_{BC}$ such that $U_{BC} \ket{\bar a_j}_{BC} \equiv \ket{j}_B \ket{0}_C$.
Therefore, $U_{BC} \ket{\psi} = \ket{\phi}_{AB} \ket{0}_C$ and we have:
\begin{eqnarray}
0 > \bra{\psi} \beta^{T_A} \ket{\psi} & = & \bra{0}\bra{\phi} U_{BC} \beta^{T_A} U_{BC}^\dagger \ket{\phi} \ket{0} \nonumber \\
& = & \bra{\phi} (\bra{0} U_{BC} \beta U_{BC}^\dagger \ket{0} )^{T_A}\ket{\phi},
\end{eqnarray}
where the last equality follows from commutativity of the operations on $A$ and $BC$.
The expression in the bracket is given by the (unnormalized) state $p_0\bar \beta_{AB|0}$ of $AB$ after $C$ observes the measurement result corresponding to the projection on $\ket{0}$. Here $p_0$ is the probability to observe such outcome and $\bar \beta = U_{BC} \beta U_{BC}^\dagger$.
We conclude that $\bar \beta_{AB|0}$ has negative partial transposition.
\end{proof}


\begin{theorem}[On impossibility of entanglement creation via a separable carrier under further restrictions]
\label{LEM_IMPOSSIBLE}
The creation of entanglement through separable states is impossible starting with a qubit $A$ and a qudit $B$ such that $\mathcal{D}_{B|A}(\alpha) = 0$, for $C$ initially in a pure state and unitary encoding.
\end{theorem}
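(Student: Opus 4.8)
The plan is to establish a slightly stronger statement: under the stated restrictions, $\cE_{C:AB}(\beta)=0$ already forces $\cE_{A:BC}(\beta)=0$. Since the conditions \eqref{EQ:NECB}--\eqref{EQ:NECA} for entanglement creation with a separable carrier require $\cE_{C:AB}(\beta)=0$ together with $\cE_{A:BC}(\beta)>0$, this rules the phenomenon out. First I would put the states in normal form. Since $\cD_{B|A}(\alpha)=0$ and $A$ is a qubit, $\alpha_{AB}=p_0\proj0_A\otimes\alpha^0_B+p_1\proj1_A\otimes\alpha^1_B$ for some fixed orthonormal basis $\{\ket0_A,\ket1_A\}$, and as $C$ starts pure we may write $\alpha=\alpha_{AB}\otimes\proj0_C$. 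The unitary encoding $U_{AC}$ then produces $\beta=\sum_{i=0,1}p_i\proj{\chi_i}_{AC}\otimes\alpha^i_B$ with $\ket{\chi_i}:=U_{AC}\ket i_A\ket0_C$ and $\braket{\chi_0}{\chi_1}=0$. Two facts are worth recording: $\beta$ is supported on $\mathcal S\otimes\mathcal H_B$, where $\mathcal S:=\mathrm{span}\{\ket{\chi_0},\ket{\chi_1}\}\subseteq\mathcal H_A\otimes\mathcal H_C$ has dimension at most two, and $\beta_{AC}=\Tr_B\beta=p_0\proj{\chi_0}+p_1\proj{\chi_1}$ has rank at most two.

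Next I would use $\cE_{C:AB}(\beta)=0$, which in particular makes $\beta_{AC}$ separable across $A:C$. Being separable and of rank at most two, $\beta_{AC}$ can be written as a convex mixture of pure product states, and picking two of them that span its range shows that $\mathcal S=\mathrm{span}\{\ket a_A\ket c_C,\ket{a'}_A\ket{c'}_C\}$ for suitable product vectors (the rank-one case is immediate). There are three possibilities. If $\ket a$ and $\ket{a'}$ are proportional, then $\mathcal S=\ket a_A\otimes\mathcal S_C$, so $\ket{\chi_i}=\ket a_A\ket{v_i}_C$ and $\beta=\proj a_A\otimes(\sum_ip_i\proj{v_i}_C\otimes\alpha^i_B)$ is $A:BC$-separable. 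If $\ket c$ and $\ket{c'}$ are proportional, then symmetrically $\beta=(\sum_ip_i\proj{\eta_i}_A\otimes\alpha^i_B)\otimes\proj c_C$ is $A:BC$-separable. The remaining case, with $\ket a,\ket{a'}$ linearly independent and $\ket c,\ket{c'}$ linearly independent, is the substantive one.

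In this case the only product vectors of $\mathcal S$ are the scalar multiples of $\ket{ac}$ and $\ket{a'c'}$, since any $\lambda\ket{ac}+\mu\ket{a'c'}$ with $\lambda\mu\neq0$ has Schmidt rank two across $A:C$. I would then invoke $\cE_{C:AB}(\beta)=0$ again through a separable decomposition $\beta=\sum_k q_k\proj{\phi_k}_C\otimes\tau^k_{AB}$. Each summand has support contained in that of $\beta$, i.e.\ in $\mathcal S\otimes\mathcal H_B$; projecting onto $\mathcal S^\perp$ within $AC$ one finds that every $\ket{\phi_k}_C$ must be the $C$-part of a product vector of $\mathcal S$ -- hence $\ket{\phi_k}_C\in\{\ket c_C,\ket{c'}_C\}$ -- and, using that $\ket x_A\ket c_C\in\mathcal S$ only for $\ket x\propto\ket a$, that the accompanying $\tau^k_{AB}$ has its $A$-support on $\ket a_A$ (resp.\ $\ket{a'}_A$). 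Grouping terms yields $\beta=\proj a_A\otimes\gamma_{BC}+\proj{a'}_A\otimes\gamma'_{BC}$ with $\gamma_{BC},\gamma'_{BC}\geq0$, which is manifestly $A:BC$-separable. In every case $\cE_{A:BC}(\beta)=0$, contradicting \eqref{EQ:NECA}.

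The step I expect to require the most care is the last one: showing that only $\ket c_C$ and $\ket{c'}_C$ can appear as carrier components in a $C:AB$-separable decomposition of a state supported on $\mathcal S\otimes\mathcal H_B$, and then that the matching bipartite blocks $\tau^k_{AB}$ must have the claimed product form. The whole argument rests on $A$ being a qubit: this is what keeps $\mathcal S$ two-dimensional and $\beta_{AC}$ of rank at most two, so that the product vectors of $\mathcal S$ form either a projective line or just two points. It is also precisely here that the hypotheses "$C$ initially pure" and "unitary encoding" enter -- without them $\beta_{AC}$ need not have rank at most two and the dichotomy collapses.
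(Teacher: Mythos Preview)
Your argument is correct, and it takes a genuinely different route from the paper's proof. Both proofs begin by writing $\beta=\sum_{i=0,1}p_i\proj{\chi_i}_{AC}\otimes\alpha^i_B$ and exploiting the fact that the $AC$-support $\mathcal S$ is at most two-dimensional. From there the paper \emph{steers} the $AC$ system: it applies a one-parameter family of POVM elements on $B$ to obtain a segment of conditional states $\tilde p\,\proj{\chi_0}+(1-\tilde p)\proj{\chi_1}$, notes that all of these must be $A{:}C$-separable, and then invokes Theorem~1 of Ref.~\cite{STV} on product vectors in a two-plane of $\mathbb C^2\otimes\mathbb C^2$ to force $\ket{\chi_0}$ and $\ket{\chi_1}$ themselves to be product, hence $\beta$ fully separable. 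You instead trace out $B$ once, use separability and rank~$\leq 2$ of $\beta_{AC}$ to span $\mathcal S$ by two product vectors, classify the product vectors of $\mathcal S$ by hand (which is precisely the content of the STV lemma in this special case), and then feed this back into a $C{:}AB$-separable decomposition of the \emph{full} $\beta$ via the support constraint to conclude $A{:}BC$-separability directly. Your approach is self-contained---no external theorem is cited---and avoids the ``finite range of $\tilde p$'' argument; the paper's approach, on the other hand, yields the slightly sharper statement that the $\ket{\chi_i}$ are themselves product (though your third case in fact recovers full separability as well). The step you flagged as delicate---that any $\ket{\phi_k}_C$ appearing in a $C{:}AB$-separable decomposition must be the $C$-factor of a product vector of $\mathcal S$---is indeed the crux, and is most cleanly seen by tracing out $B$ from each term to get $\proj{\phi_k}_C\otimes\tau^k_A$ supported in $\mathcal S$, so that every eigenvector $\ket w$ of $\tau^k_A$ gives a product vector $\ket w_A\ket{\phi_k}_C\in\mathcal S$.
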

\begin{proof}
By assumption, the initial $AB$ state is of the form
\begin{equation}
{\alpha}_{AB} = \sum_{a=0}^1 p_a \proj{a} \otimes \alpha_{B|a}
\end{equation}
with $\{ \ket{a} \}$ being two generic pure states of A.
The state after unitary $AC$ encoding reads $\beta = \sum_{a=0,1} p_a \proj{\psi_a}_{AC} \otimes \alpha_{B|a}$.
Note that $\alpha_{B|0}$ must be different from $\alpha_{B|1}$ as otherwise distribution is impossible because the initial state is classical on $B$.
Therefore, there exists a POVM element $0\leq M \leq \openone$ such that $\Tr(M \alpha_{B|0})\neq \Tr(M \alpha_{B|1})$. 
This in turns implies that we can choose a family of POVM elements $M(q)$ (e.g., $M(q) = (1-q) M + q \openone$) for which we obtain a family of conditional states of $AC$
\begin{equation}
\label{eq:segment1}
\tilde{\beta}_{AC} = \tilde{p} \proj{\psi_0} + (1-\tilde{p}) \proj{\psi_1},
\end{equation}
with $\tilde{p}$ a probability varying in some finite range. 
Since by assumption, $C$ is separable from $A$ in state $\beta$, 
they are also separable after any measurement on $B$, i.e. $\tilde{\beta}_{AC} $ is separable for all possible $\tilde{p}$.

For any separable $\tilde{\beta}_{AC}$ there exists an ensemble of pure factorized states $\ket{\tilde a_i}\otimes \ket{\tilde c_i}$ with corresponding probabilities $s_i$ such that
\begin{equation}
\label{eq:segment2}
\tilde{\beta}_{AC} =\sum_i s_i \proj{\tilde a_i \tilde c_i}.
\end{equation}
Since $\tilde{\beta}_{AC}$ has rank two, all these pure product states are spanned by two pure product states, let us say $ \ket{\tilde a_0 \tilde c_0}$ and $ \ket{\tilde a_1 \tilde c_1}$, 
which span also $\ket{\psi_0}$ and $\ket{\psi_1}$. It also follows that $\tilde{\beta}_{AC}$ can be seen as a state on $\mathbb{C}^2\otimes\mathbb{C}^2$.

We now use Theorem $1$ of Ref.~\cite{STV} stating that for any plane in $\mathbb{C}^2 \otimes\mathbb{C}^2$ defined by two product vectors, 
either all the states in this plane are product vectors, or there is no other product vector in it.
It follows that either $\ket{\psi_0}$ and $\ket{\psi_1}$ are product vectors
or $\tilde{\beta}_{AC}$ can be written as convex mixture of only $\ket{\tilde a_0 \tilde c_0}$ and $\ket{\tilde a_1 \tilde c_1}$.
Since the space spanned by $\ket{\psi_0}$ and $\ket{\psi_1}$ is the same as that spanned by $\ket{\tilde a_0 \tilde c_0}$ and $\ket{\tilde a_1 \tilde c_1}$,
Eq.~\eqref{eq:segment1} is equal to Eq.~\eqref{eq:segment2} where we sum only over $i=0,1$.
Since this should hold for a finite range of $\tilde{p}$, these two decompositions must coincide and therefore $\ket{\psi_0}$ and $\ket{\psi_1}$ are product vectors.
Finally, $\beta$ is fully separable and entanglement distribution is impossible. 
\end{proof}


\begin{theorem}[On entanglement increase via separable carrier]
Let $d_{\mathrm{tot}}$ denote the total (finite) dimension of the state $\beta_{ABC}$
and let $d$ be the dimension of the carrier system $C$.
If state $\beta_{ABC}$ is $AB:C$ separable, we have
\begin{equation}
\mathcal{E}_{A:CB}(\beta) - \mathcal{E}_{AC:B}(\alpha) \leq \left(1-\frac{1}{d_{\mathrm{tot}}^2} \right) \log d.
\label{SI_ENT_VIA_SEP}
\end{equation}
\end{theorem}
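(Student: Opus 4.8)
The plan is to deduce the claim from Theorem~\ref{TH1} together with a bound on the discord of states separable across $AB:C$. Since the encoding $\cM_{AC}$ is local on $A$ and $C$ it cannot increase entanglement in the $AC:B$ cut, so $\cE_{AC:B}(\beta)\le\cE_{AC:B}(\alpha)$; combined with Theorem~\ref{TH1} applied to $\beta$ this gives $\cE_{A:CB}(\beta)-\cE_{AC:B}(\alpha)\le\cE_{A:CB}(\beta)-\cE_{AC:B}(\beta)\le\cD_{AB|C}(\beta)$. Hence it suffices to prove $\cD_{AB|C}(\beta)\le(1-1/d_{\mathrm{tot}}^2)\log d$ whenever $\beta=\beta_{ABC}$ is $AB:C$-separable. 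The bound $\cD_{AB|C}(\beta)\le\log d$ is immediate from the definition $\cD_{AB|C}(\beta)=\min_{\Pi_C}[S(\Pi_C(\beta))-S(\beta)]$, since a complete projective measurement on a $d$-dimensional system raises the von Neumann entropy by at most $\log d$; the entire content is the improvement of the prefactor.

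To improve it I would evaluate the bound on $\cD_{AB|C}(\beta)$ for the particular complete von Neumann measurement $\Pi_C$ in the eigenbasis of the reduced state $\beta_C$. Writing $\Pi_C(\beta)=\sum_j\lambda_j\,\tau^j_{AB}\otimes\proj{j}_C$, with $\lambda_j$ the eigenvalues of $\beta_C$ and $\tau^j_{AB}$ the conditional states of $AB$, one gets
\begin{equation}
\cD_{AB|C}(\beta)\le S(\beta_C)+\sum_j\lambda_j S(\tau^j_{AB})-S(\beta_{ABC}).
\end{equation}
Separability across $AB:C$ then enters twice: it implies positivity of the conditional entropy, $S(\beta_{ABC})\ge S(\beta_{AB})$, and, by concavity of the entropy together with $\sum_j\lambda_j\tau^j_{AB}=\beta_{AB}$, it gives $S(\beta_{AB})\ge\sum_j\lambda_j S(\tau^j_{AB})$. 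These already reproduce $\cD_{AB|C}(\beta)\le S(\beta_C)\le\log d$, and they isolate the deficit as $\log d-\cD_{AB|C}(\beta)\ge[\log d-S(\beta_C)]+\chi_{AB}$, where $\chi_{AB}:=S(\beta_{AB})-\sum_j\lambda_j S(\tau^j_{AB})\ge0$ is the Holevo quantity of the post-measurement ensemble.

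The hard part is to bound this deficit from below: $[\log d-S(\beta_C)]+\chi_{AB}\ge\frac{1}{d_{\mathrm{tot}}^2}\log d$ for every $AB:C$-separable $\beta$. Qualitatively this must hold: if both non-negative terms vanished then $\beta$ would coincide with the product state $\beta_{AB}\otimes\openone_C/d$, which has zero discord, so $\log d$ is never attained; and, more strongly, saturating $S(\Pi_C(\beta))-S(\beta)=\log d$ forces $\beta$ to contain a component maximally entangled across $AB:C$, which no separable state can do. To make this quantitative I would use the Vidal--Tarrach criterion quoted in the main text: $p\ket{\psi}\bra{\psi}+(1-p)\openone/d_{\mathrm{tot}}$ is separable only if $p\le(1+a_1a_2 d_{\mathrm{tot}})^{-1}$, and since $a_1a_2\le\tfrac12$ this forces $p\le 2/(d_{\mathrm{tot}}+2)$ for every pure $\ket{\psi}$. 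Estimating the admixture of $\openone/d_{\mathrm{tot}}$ that a separable $\beta$ must carry along its most coherent direction in the measurement eigenbasis then lower-bounds $\chi_{AB}+[\log d-S(\beta_C)]$, and propagating this admixture bound through the entropy estimates to reach precisely $(1-1/d_{\mathrm{tot}}^2)\log d$ is where essentially all the work lies. Equivalently, one may argue by compactness: $\cD_{AB|C}$ attains a maximum over the closed set of $AB:C$-separable states, that maximum is strictly below $\log d$ by the argument above, and Vidal--Tarrach converts the strict inequality into the explicit gap.
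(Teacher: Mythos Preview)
Your reduction to bounding $\cD_{AB|C}(\beta)$ for $AB:C$-separable $\beta$ is correct and matches the paper. The gap is in the improvement of the prefactor.

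The inequality you isolate as ``the hard part'', namely $[\log d - S(\beta_C)] + \chi_{AB}\ge d_{\mathrm{tot}}^{-2}\log d$ for every $AB:C$-separable $\beta$, is simply false. Take $\beta=\beta_{AB}\otimes\openone_C/d$: then $S(\beta_C)=\log d$ and every conditional state $\tau^j_{AB}$ equals $\beta_{AB}$, so $\chi_{AB}=0$ and the left-hand side vanishes. You anticipate this example and note that its discord is zero, but that observation does not rescue the argument: once you have replaced $\cD_{AB|C}(\beta)$ by the looser upper bound $S(\beta_C)-\chi_{AB}$ coming from one fixed measurement, you can no longer appeal to the true value of the discord. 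Your bound returns $\log d$ for the product state, and nearby separable states with small but nonzero discord will still give values arbitrarily close to $\log d$. The Vidal--Tarrach criterion does not help here, since it constrains only pseudo-pure states, and compactness can certify that some strict gap exists but never produces a specific constant.

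The paper takes a different route that avoids committing to a single measurement. It invokes the activation identity $\cD_{X|Y}(\tau)=\min_{\{\ket{y_i}\}}\cE_{XY:Y'}(\tilde\tau_{XYY'})$ from Refs.~\cite{DISCORD_ENT,DISCORD_ENT2}, so that bounding discord becomes bounding an entanglement. Separability of $\beta$ across $AB:C$ is then used through the Carath\'eodory-type fact that a separable state on $\mathbb{C}^{d_X}\otimes\mathbb{C}^{d_Y}$ admits a pure product decomposition with at most $(d_Xd_Y)^2$ terms, so the largest weight satisfies $p_1\ge(d_Xd_Y)^{-2}$. Choosing the basis $\{\ket{y_i}\}$ so that $\ket{y_1}$ matches the $Y$-factor of that term makes the corresponding contribution to $\tilde\tau$ manifestly product across $XY:Y'$, and convexity of $\cE$ gives $\cD_{X|Y}\le(1-p_1)\log d_Y\le(1-d_{\mathrm{tot}}^{-2})\log d$. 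The constant $d_{\mathrm{tot}}^{-2}$ thus comes directly from the Carath\'eodory bound, not from Vidal--Tarrach.
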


\begin{proof}
Using Eq. (2) of the main text, the left-hand side of \eqref{SI_ENT_VIA_SEP} is upper-bounded by the discord $\mathcal{D}_{AB|C}(\beta)$.
The right-hand side of \eqref{SI_ENT_VIA_SEP} follows from the general relation
\begin{equation}
\cD_{X|Y}(\rho_{X:Y})\leq\left[1-\frac{1}{(d_Xd_Y)^2}\right]\log d_Y,
\label{SI_DISCORD_SEP_BOUND}
\end{equation}
that holds for all separable states $\rho_{X:Y}$ of bipartite systems with (finite) dimensions $d_X$ and $d_Y$, respectively,
by choosing system $Y$ to be the carrier, i.e. $d_Y = d$.
It remains to derive Eq.~\eqref{SI_DISCORD_SEP_BOUND}.
In~Refs.~\cite{DISCORD_ENT,DISCORD_ENT2} it was proven that $\cD_{X|Y}(\tau_{XY})$ can be written as
\begin{equation}
\label{EQ:ACTIV}
\cD_{X|Y}(\tau_{XY})=\min_{\{\ket{y_i}\}}\cE_{XY:Y'}(\tilde{\tau}_{XYY'})
\end{equation}
with $\{\ket{y_i}\}$ an orthonormal basis for $Y$ and $\tilde{\tau}_{XYY'}=\sum_{ij}\tau_X^{ij}\otimes \ket{y_i}\bra{y_j}_{Y}\otimes  \ket{y_i}\bra{y_j}_{Y'}$ if $ \tau_{XY}$ is expanded as $\tau_{XY}= \sum_{ij}\tau_X^{ij}\otimes \ket{y_i}\bra{y_j}_{Y}$. The minimization required in Eq.~\eqref{EQ:ACTIV} is over all possible choices for the basis $\{\ket{y_i}\}$. From such expression for $\cD_{X|Y}(\tau_{XY})$, one finds
$$
\cD_{X|Y}(\tau_{XY})\leq \max_{{\xi}_{XYY'}} \cE_{XY:Y'}(\xi_{XYY'})\leq \log d_Y
$$
due to the fact that the relative entropy of entanglement is upper-bounded by the logarithm of the local dimensions and $d_{Y'}=d_{Y}$. On the other hand, any separable state $\rho_{X:Y}$ admits a pure-state ensemble decomposition $\rho_{X:Y}=\sum_{i=1}^R p_i \proj{\alpha_i}_X \otimes \proj{\beta_i}_Y$ with $R\leq (d_Xd_Y)^2$~\cite{PHRANGE}. Without loss of generality, let $p_1$ be the largest probability in such ensemble. Clearly, $p_1\geq  1/(d_Xd_Y)^2$. By choosing a basis $\{\ket{y_i}\}$ for $Y$ such that $\ket{y_1}=\ket{\beta_1}$, we get
\[
\rho_{X:Y}=p_1 \proj{\alpha_1}_X \otimes \proj{\beta_1}_Y + (1-p_1) \rho'_{X:Y},
\]
with $\rho'_{X:Y}$ a (separable) state. Consequently
\[
\begin{split}
\tilde{\rho}_{XYY'}&=p_1 \proj{\alpha_1}_X \otimes \proj{\beta_1}_Y\otimes \proj{\beta_1}_{Y'} \\
&\quad+ (1-p_1) \tilde{\rho'}_{XYY'}.
\end{split}
\]
We thus find
\begin{multline*}
\cD_{X|Y}(\tau_{XY})\\
\begin{aligned}
&\leq \cE_{XY:Y'}\Big(p_1 \proj{\alpha_1}_X \otimes \proj{\beta_1}_Y\otimes \proj{\beta_1}_{Y'}\\
&\quad+ (1-p_1) \tilde{\rho'}_{XYY'}\Big)\\
&\leq (1-p_1) \cE_{XY:Y'}(\tilde{\rho'}_{XYY'})\\
&\leq \left[1-\frac{1}{(d_Xd_Y)^2}\right]\log d_Y,
\end{aligned}
\end{multline*}
where the first inequality is due to the choice of a specific basis $\{\ket{y_i}\}$, which might be non-optimal for the sake of relation~\eqref{EQ:ACTIV}, and the second inequality comes from the convexity of the relative entropy of entanglement.
\end{proof}

\section{Examples}

We now discuss a series of examples of entanglement distribution via separable carrier states. We consider both entangled and separable initial states.

\begin{example}
Entanglement distribution with vanishing initial entanglement.
\end{example}
We begin with the following Theorem of Ref.~\cite{VT1999}:
\begin{equation}
\rho_{p} \textrm{ is separable} \iff p \le \frac{1}{1+a_1 a_2 d_{\mathrm{tot}}},
\label{VT_THEOREM}
\end{equation}
with $\rho_{p} = p \proj{\psi} + (1-p) \openone/d_{\mathrm{tot}}$;
$a_1,a_2$ are the two biggest Schmidt coefficients of $\ket{\psi}$, 
and $d_{\mathrm{tot}}$ is the dimension of the Hilbert space for the total system.
Consider now a tripartite pure state $\ket{\psi}_{ABC}$.
It admits three Schmidt decompositions corresponding to three different bipartite cuts,
$\ket{\psi} = \sum_i x_i \ket{x_i}_X\ket{\bar{x}_i}_{\widetilde{X}}$ where $x=a,b,c$ denote coefficients and states for suitable bipartitions, 
$X=A,B,C$ denotes a single system, and $\widetilde X = BC,AC,AB$ denotes the other two systems different than $X$.
In order to obtain states useful for entanglement distribution we assume
\begin{equation}
a_1 a_2 > M \equiv \max(b_1 b_2, c_1 c_2).
\label{VT_ASSUMPTION}
\end{equation}
and consider states $\rho_\Upsilon$ for the critical entanglement admixture $\Upsilon = 1/(1+Md_{\mathrm{tot}})$.
By construction, such states $\rho_{\Upsilon}$ satisfy all the requirements of entanglement distribution via separable system.
Furthermore, due to the properties of Schmidt decomposition, the vectors $\ket{\bar b_i}_{AC}$ are orthogonal 
and for $d_B \le d_A$ there exists an encoding unitary such that $U_{AC} \ket{i}_A \ket{0}_C = \ket{\bar b_i}_{AC}$.
Applying this unitary on the initial state $\alpha = \Upsilon \proj{\phi} \otimes \proj{0} + (1-\Upsilon) \frac{1}{d_{\mathrm{tot}}} \openone$,
where the pure state between $A$ and $B$ reads $\ket{\phi} = \sum_i b_i \ket{i}_A \ket{b_i}_B$, produces $\beta = \rho_{\Upsilon}$.
It remains to show that $\alpha$ is $AC:B$-separable, and actually fully separable.
This follows from the fact that the final state $\rho_\Upsilon$ is $AC:B$ separable by construction, and that $\alpha=U^\dagger_{AC}\rho_\Upsilon U_{AC}$. 
Furthermore, $\alpha$ is invariant under a projective measurement on $C$ in the computational basis.
The original example of Ref.~\cite{CVDC} belongs to this family.

\begin{example}
Entanglement distribution with non-zero initial entanglement.
\end{example}

Consider first the three-qubit example studied by Cubitt \emph{et. al.} \cite{CVDC}.
Their initial state reads:
\begin{equation}
\begin{aligned}
\label{cucu}
\Lambda_{ABC} & = \left( \frac{1}{3}|\phi^{+}\rangle \langle\phi^{+}|+\frac{1}{6}|01\rangle\langle01|+\frac{1}{6}|10\rangle \langle10| \right) \otimes |0\rangle \langle0|\\
& + \left( \frac{1}{6}|00\rangle\langle00|+\frac{1}{6}|11\rangle \langle11| \right) \otimes |1\rangle \langle1|,
\end{aligned}
\end{equation}
where $|\phi^{+}\rangle = \frac{1}{\sqrt{2}}(|00\rangle+|11\rangle)$ is the maximally entangled state of $A$ and $B$.
In the original protocol, Alice begins with qubits $A$ and $C$ in her lab, and Bob with qubit $B$.
They do not share any entanglement initially.
Alice then performs a $\textsc{CNOT}$ operation with qubit $A$ as the control qubit and passes qubit $C$ to Bob,
who then performs yet another $\textsc{CNOT}$ operation, this time on the subsystem $BC$ with $B$ as the control qubit.
It was demonstrated that the qubit $C$ is separable at all stages of the protocol
and also that Alice and Bob would share some entanglement in the final state.

Consider the initial state of the form
\begin{equation}
\alpha= p \Lambda_{ABC} + (1-p) \Lambda_{\mathrm{ent}},
\end{equation}
with  $\Lambda_{\mathrm{ent}} \equiv \frac{1}{3}|\phi^{+}\rangle \langle\phi^{+}|\otimes|0\rangle \langle0|+ \left( \frac{1}{3}|00\rangle\langle00|+\frac{1}{3}|11\rangle \langle11| \right)\otimes|1\rangle\langle1|$
and choose $0< p <1$.
This state is then subjected to the same protocol as in the original  example
\begin{equation}
\label{proc}
\alpha \xrightarrow{\textsc{CNOT}_{AC}} \beta \xrightarrow{\textsc{CNOT}_{BC}} \gamma,
\end{equation}
with the final state given by
\begin{eqnarray}
\gamma & = & \frac{1}{3} \proj{\phi^+} \otimes \proj{0} + \frac{2}{3} \gamma_{\mathrm{sep}} \otimes \proj{1},
\end{eqnarray}
where $\gamma_{\mathrm{sep}} = p \openone_4/4 + (1-p)( \proj{00} +\proj{11})/2$
is a separable state of systems $AB$ and $\openone_d$ denotes the $d \times d$ identity matrix.

One can verify using partial transposition that Alice and Bob share some initial entanglement~\cite{PERES,HoroNPT}. 
Furthermore, one has the bound
\begin{equation}
\begin{aligned}
\cE_{AC:B}(\alpha)
&\leq (1-p)\cE_{AC:B}(\Lambda_\mathrm{ent})\\
&=\frac{1-p}{3}\cE_{AC:B}(\proj{\phi^+})\\
&=\frac{1-p}{3}.
\end{aligned}
\end{equation}
The inequality is due to the convexity of the relative entropy of entanglement and the fact that $\Lambda_{ABC}$ is $AC:B$ separable. The first equality is due to the fact that $C$ can be seen as a classical flag in $\Lambda_\mathrm{ent}$~\cite{MHflags}. Finally $\cE_{AC:B}(\proj{\phi^+})=1$.

Putting the state through the entire process summarized in Eq.~\eqref{proc}, one can verify that $\Lambda_{\mathrm{ent}}$ remains $C$-separable after Alice performs her operation
(by checking partial transposition and using the results of Ref.~\cite{DUR}).
The carrier thus remains separable just as in the original example.
At the end of the protocol, if Bob measures the carrier in the standard basis,
he obtains outcome zero with probability ${1}/{3}$, in which case the joint system composed of $A$ and $B$ is in a maximally entangled state.
Therefore, the relative entropy of entanglement of the final state is $\mathcal{E}_{A:BC}(\gamma) ={1}/{3}$~\cite{MHflags}.
Entanglement must have strictly increased through the transfer of a separable carrier.

\begin{example}
Entanglement distribution using initially uncorrelated carrier.
\end{example}
Consider two qubits, $A$ and $B$, prepared in
\begin{equation}
\alpha_{AB} = p \proj{\psi} + (1-p) \tfrac{1}{4} \openone_4,
\end{equation}
where $\ket{\psi} = \sqrt{s} \ket{00} + \sqrt{1-s} \ket{11}$ and $s\in[0,1]$.
The carrier qubit is initially uncorrelated and in the completely mixed state $\alpha_C =\openone_2/2$. 
As unitary encoding operation, we choose
\begin{equation}
U_{AC} = \left(
\begin{array}{cccc}
0 & 0 & 1 & 0 \\
\sqrt{u} & 0 & 0 & -\sqrt{1-u} \\
\sqrt{1-u} & 0 & 0 & \sqrt{u} \\
0 & 1 & 0 & 0
\end{array}
\right),
\end{equation}
with $u \in [0,1]$.
After the application of $U_{AC}$, the initial state is transformed into
$\beta = (\beta_0 +\beta_1)/2$,
where
\begin{equation}
\beta_j = p \proj{\psi_j} + \frac{(1-p)}{8} \openone_8,
\end{equation}
and $\ket{\psi_j} = U_{AC} \ket{\psi} \otimes \ket{j}_C$.
By construction, the entanglement in the $B:AC$ splitting of state $\beta$ vanishes.
To ensure the other separability requirements we apply the Theorem stated in Eq.~\eqref{VT_THEOREM} to the initial state $\alpha_{AB}$ and to the states $\beta_j$. We get
\begin{eqnarray}
\mathcal{E}_{A:B}(\alpha) = 0 & \iff & p \le \frac{1}{1+4 \sqrt{s(1-s)}}, \\
\mathcal{E}_{AB:C}(\beta_0) = 0 & \iff & p \le \frac{1}{1+ 8 \sqrt{su(1-s u)}}, \nonumber \\
\mathcal{E}_{AB:C}(\beta_1) = 0 & \iff & p \le \frac{1}{1+ 8 \sqrt{u(1-s)[1- u (1-s)]}}. \nonumber
\end{eqnarray}
All these requirements are simultaneously satisfied by taking $p = \frac{1}{1+4 \sqrt{s(1-s)}}$ with $u \le 1-\frac{\sqrt{3}}{2} \approx 0.134$ and $s$ in the range
\begin{equation}
\frac{4u(1-u)}{1-4u^2} \le s \le \frac{4u-1}{4u^2 - 1}.
\end{equation}
In order to determine whether $A$ is entangled with $BC$ in state $\beta$, we have resorted to the separability criterion provided by the positivity of the partial transposition~\cite{PERES,HoroNPT}. By taking the partial transpose of the state with respect to $A$, we have determined numerically a range of parameters within which a negative eigenvalue appears. We have found this to occur for $0.022 \gtrsim u > 0$, and the degree of violation of the separability criterion that is maximized for $s = \frac{4u(1-u)}{1-4u^2}$.


\end{document}